\definecolor{myurlcolor}{rgb}{0,0,0.4}
\definecolor{mycitecolor}{rgb}{0,0.5,0}
\definecolor{myrefcolor}{rgb}{0.5,0,0}
\newtheorem{remark}{Remark}
\newtheorem{proposition}{Proposition}
\newtheorem*{proof*}{Proof}
\newcommand{\be}{\begin{equation}}
\newcommand{\ee}{\end{equation}}
\newcommand{\vsp}{\vspace{0.4cm}}
\newcommand{\grit}[1]{{\bfseries {\itshape {#1}}}}
\newcommand{\cfr}[1]{({\itshape cf.} {#1})}
\newcommand{\ra}{\rightarrow}
\newcommand{\hh}{\mathcal{H}}
\newcommand{\bh}{\mathcal{B}(\mathcal{H})}
\newcommand{\bhsa}{\mathcal{B}_{sa}(\mathcal{H})}
\newcommand{\Glh}{\mathcal{GL}(\mathcal{H})}
\newcommand{\Uh}{\mathcal{U}(\mathcal{H})}
\newcommand{\stsph}{\mathscr{S}(\mathcal{H})}
\newcommand{\posh}{\mathscr{P}(\mathcal{H})}
\newcommand{\Tr}{\textit{Tr}}
\newcommand{\gapp}{\mathscr{G}}
\title{G-dual teleparallel connections in Information Geometry}
\author{F. M. Ciaglia$^{1,5}$ \href{https://orcid.org/0000-0002-8987-1181}{\includegraphics[scale=0.7]{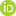}}, F. Di Cosmo$^{1,2,6}$ \href{https://orcid.org/0000-0003-0256-5913}{\includegraphics[scale=0.7]{ORCID.png}},  A. Ibort$^{1,2,7}$ \href{https://orcid.org/0000-0002-0580-5858}{\includegraphics[scale=0.7]{ORCID.png}},  G. Marmo$^{3,4,8}$ \href{https://orcid.org/0000-0003-2662-2193}{\includegraphics[scale=0.7]{ORCID.png}}
}
\begin{document}

\maketitle 

\noindent
{\footnotesize $^{1}$  Department of Mathematics, University Carlos III de Madrid, Legan\'es, Madrid, Spain   \\
$^{2}$ ICMAT, Instituto de Ciencias Matem\'{a}ticas (CSIC-UAM-UC3M-UCM)  \\
$^{3}$ INFN-Sezione di Napoli, Naples, Italy  \\
$^{4}$ Department of Physics ``E. Pancini'', University of Naples Federico II,  Naples, Italy

\bigskip
\noindent
$^{5}$\texttt{fciaglia[at]math.uc3m.es}, $^{6}$\texttt{fcosmo[at]math.uc3m.es}, $^{7}$\texttt{albertoi[at]math.uc3m.es}, $^{8}$\texttt{marmo[at]na.infn.it}}
	
\begin{abstract}

Given a real, finite-dimensional, smooth parallelizable Riemannian manifold $(\mathcal{N},G)$ endowed with a teleparallel connection $\nabla$ determined by a choice of a global basis of vector fields on $\mathcal{N}$, we show that the $G$-dual connection $\nabla^{*}$ of $\nabla$ in the sense of Information Geometry  must be the teleparallel connection determined by the basis of $G$-gradient vector fields associated with a basis of differential one-forms which is (almost) dual to the basis of vector fields determining $\nabla$.
We call any such pair $(\nabla,\nabla^{*})$ a \grit{$G$-dual teleparallel pair}.
Then, after defining a covariant $(0,3)$ tensor $T$ uniquely determined by   $(\mathcal{N},G,\nabla,\nabla^{*})$, we show that  $T$ being symmetric in the first two entries is equivalent to $\nabla$ being torsion-free, that $T$ being symmetric in the first and third entry is equivalent to $\nabla^{*}$ being torsion free,  and  that $T$ being  symmetric in the second and third entries is equivalent to the basis vectors determining $\nabla$ ($\nabla^{*}$)  being parallel-transported by $\nabla^{*}$ ($\nabla$).
Therefore, $G$-dual teleparallel pairs provide a generalization of the notion of Statistical Manifolds usually employed in Information Geometry, and we present explicit examples of  $G$-dual teleparallel pairs arising both in the context of both Classical and Quantum Information Geometry.

\end{abstract}

\tableofcontents

\section{Introduction}

From a purely mathematical point of view, the structure theory of Classical Information Geometry is the theory of Riemannian smooth manifolds with an additional structure.
This additional structure may be described, equivalently, using a pair of affine connections  or a covariant  $(0,3)$  tensor \cfr{\cite{Amari-1985,A-N-2000,Lauritzen-1987}}.
Specifically, in the first case we have a Riemannian smooth manifold $(\mathcal{N}, G)$, whose points parametrize probability distributions on a suitable outcome space, a torsion-free affine connection $\nabla$, and another torsion-free affine connection $\nabla^{*}$ which is dual to $\nabla$ with respect to $G$ in the sense that
\be\label{eqn: dual connections 0}
X(G(Y,Z))=G\left(\nabla_{X}Y,Z\right) + G\left(Y, \nabla^{*}_{X}Z\right)
\ee
for all vector fields $X,Y,Z$ on $\mathcal{N}$.
In general, $\nabla\neq \nabla^{*}$ unless $\nabla$ is the Levi-Civita connection $\nabla^{G}$ of $G$.
However, both $\nabla$ and $\nabla^{*}$ can be expressed in terms of $\nabla^{G}$ and a covariant tensor  essentially because the space of affine connections is, quite appropriately, an affine space modelled on a vector space of tensor fields.
Specifically, it holds
\be\label{eqn: 3-tensor 00}
G(\nabla_{X}Y,Z) - G(\nabla^{G}_{X}Y,Z) = T(X,Y,Z)
\ee
with $T$ a covariant $(0,3)$-tensor field on $\mathcal{N}$ which may also be written  according to 
\be\label{eqn: 3-tensor 11}
G(Y,\nabla_{X}^{G}Z) - G(Y,\nabla^{*}_{X}Z) =  T(X,Y,Z).
\ee
because of equation \eqref{eqn: dual connections 0} and because  $\nabla^{G}$ is its own $g$-dual connection.
If both $\nabla$ and $\nabla^{*}$ are torsion-free, it can be proved that $T$ is completely symmetric.

If instead of the pair $(\nabla,\nabla^{*})$ of torsion-free, $G$-dual affine connections we endow the Riemannian smooth manifold $(\mathcal{N},G)$ with a covariant $(0,3)$-tensor field $T$ which is totally symmetric, we can again obtain a pair of torsion-free, $G$-dual affine connections simply reading equation \eqref{eqn: 3-tensor 00} and \eqref{eqn: 3-tensor 11}  from right to left.
Therefore, we conclude that the two approaches are equivalent.
The triple $(\mathcal{N},G,T)$ is referred to as a \grit{statistical manifold} \cfr{\cite{Lauritzen-1987}}, and the tensor $T$ is often called \grit{Amari-\v{C}encov} tensor in honour of the two pioneers who thoroughly studied this structure \cfr{\cite{Amari-1985,A-J-L-S-2017,Cencov-1982}}.

It is worth noting how recently the theory of statistical manifolds has been approached from the point of view of Lie groupoids \cfr{\cite{G-G-K-M-2019,G-G-K-M-2020,G-G-K-M-2023}} where it is shown how Information Geometry on statistical manifolds is but a particular example of Information Geometry on Lie groupoids.
This point of view on Information Geometry may be particularly fruitful in connection with the recent reformulation of Quantum Theories in the framework of groupoids \cfr{\cite{C-DC-I-M-2020-02,C-I-M-2018,C-I-M-2019-02}}, especially considering how this reformulation of Quantum theories heavily relies on a categorical background which shares many interesting similarities with the one exploited by \v{C}encov in his pioneering work \cfr{\cite{Cencov-1982}}.
However, we only plan to investigate these matters in future works.

Coming back to statistical manifolds, of particular interest are those statistical manifolds whose Amari-\v{C}encov tensor $T$ gives rise to a pair $(\nabla,\nabla^{*})$ of $G$-dual connections which are both flat in the sense that, in addition to being torsion-free, their curvature vanishes.
In this case, the statistical manifold is called \grit{dually flat}  and some interesting properties of both applied and purely mathematical flavour present themselves \cfr{\cite{Amari-2016,A-N-2000,Shima-2007}}.
Of course, the request of flatness for the connections immediately imposes topological obstructions on the type of smooth manifold we can consider, and these kind of considerations also apply to the case of Quantum Information Geometry of finite-dimensional systems \cfr{\cite{A-T-2002,A-T-2003,Fujiwara-2021,Jencova-2003-2}}.

A typical example of a dually flat statistical manifold is given by the interior of the n-simplex endowed with the Fisher-Rao metric tensor $G_{FR}$ and with the pair of $G_{FR}$-dual affine connections known as the \grit{mixture} and \grit{exponential connection}.
The mixture connection on the interior of the n-simplex has a kind of privileged role because it is intimately connected with the convex structure of the set of probability distributions.
We refer to section \ref{sec: classical case} for a thorough discussion of this example.

It is well-known that a statistical manifold $(\mathcal{N},G,T)$ can be built suitably expanding up to third order a smooth function $S\colon\mathcal{N}\times\mathcal{N}\ra\mathbb{R}$ (also known as a two-point function on $\mathcal{N}$) satisfying some specific conditions  \cfr{\cite{A-A-2015,C-DC-L-M-M-V-V-2018,Matumoto-1993}}.
It often happens that the function $S$ is a relative entropy function, or a divergence function.
The most relevant family of relative entropies in the classical setting is that investigated by Csizar \cfr{\cite{Csizar-1963}}, and it turns out that every member in this family gives rise to the same Riemannian metric tensor which is precisely the Fisher-Rao metric tensor.

In the (finite-dimensional) quantum setting, it turns out that the situation is essentially different than in the classical setting.
First of all, on the manifold of faithful quantum states $\stsph$ of a physical system described by a finite-dimensional complex Hilbert space $\hh$ \cfr{section \ref{sec: quantum case}} there is an infinite number of Riemannian metric tensors whose theoretical (and often also applied) relevance has been thoroughly corroborated.
These Riemannian metric tensors are known as \grit{quantum monotone metric tensors} and have been completely classified by Petz \cfr{\cite{Petz-1996}}.
Every quantum monotone metric tensor may be obtained suitably expanding a quantum relative entropy which, in some sense, generalizes the classical relative entropy of Csizar mentioned above \cfr{\cite{C-DC-DN-V-2022,L-R-1999}}.
These instances are in stark contrast with what happens in the classical case where the Riemannian structure of essentially all the statistical manifolds considered  are  given by the Fisher-Rao metric tensor\footnote{We are here  deliberately ignoring all the Wasserstein-type metric tensors not because we believe they are not relevant but only because their very definition depend on additional structures  (e.g., a metric distance on the outcome space in the classical case or a metric distance on the space of pure states in the quantum case) that are not of interest for us here.}, and where every relative  entropy in the family investigated by Csizar leads to the Fisher-Rao metric tensor.

Of course, the dual pairs of affine connections generated by the quantum relative entropies associated with the quantum monotone metric tensors are necessarily torsion-free.
In particular, it is possible to build a mixture connection on $\stsph$ because, analogously to the space of classical probability distributions, the space of quantum states is a convex set \cfr{section \ref{sec: quantum case}}, and it turns out that there is only one monotone metric tensor - the so-called Bogoliubov-Kubo-Mori metric tensor \cfr{\cite{N-V-W-1975,Petz-1994,P-T-1993}} -  for which the mixture connection admits a dual connection which is the quantum analogue of the exponential connection and which is torsion-free \cfr{\cite{G-R-2001,Nagaoka-1995}}.
In this case, the associated quantum relative entropy is the von Neumann-Umegaki relative entropy \cfr{\cite{F-M-A-2019,Umegaki-1962-4,von-Neumann-1955}}.
Incidentally, the resulting statistical manifold is dually flat.

However, it turns out that there are pairs of affine connections which are dual with respect to the quantum monotone metric tensors and are not both torsion-free \cfr{\cite{A-N-2000,Fujiwara-1999,Jencova-2001}}.
In particular, examples of these dual pair with non-vanishing torsion are found when generalizing Amari's $\alpha$-representation \cfr{\cite{Amari-1985}} to the quantum case \cfr{\cite{Jencova-2001}}.
Moreover, one of the results of this work is to show that the affine connection which is dual to the mixture connection on $\stsph$ with respect to a specific quantum monotone metric tensor always has non-vanishing torsion unless the gradient vector fields of the expectation value functions of quantum mechanical observables commute.
Since the mixture connection is intimately connected with the convex structure of the space of quantum states and thus reflects a structural feature of $\stsph$ that goes beyond the formalism of Information Geometry, the idea of discarding all the cases in which the affine connection dual to the mixture one presents torsion seems too restrictive.
Indeed, already Amari and Nagaoka in their seminal work \cite[p. 19]{A-N-2000} noted how “the incorporation of torsion into the framework of information geometry, which would relate it to such fields as quantum mechanics (noncommutative probability theory) and systems theory, is an interesting topic for the future.”
Consequently, it seems that the notion of statistical manifold is not enough to completely understand the mathematical structures arising in Quantum Information Geometry.

Moreover, it is worth noting how the investigation of the appearance of torsion also in Classical Information Geometry is recently gaining  interest \cfr{\cite{H-M-2019,Z-K-2019,Z-K-2019-02,Z-K-2020}}.
In particular, the present work may be thought of as  complementing some theoretical aspects already discussed in \cite{Z-K-2019} (especially in connection with the explicit construction of what is called the g-biorthogonal frame), and providing a natural extension to the case of Quantum Information Geometry.

The aim of this work is to introduce a {\itshape variation on the theme} of dually flat connections  that allows the appearance of torsion and to show that this type of geometric structure naturally appears in Classical and Quantum Information Geometry.
The starting point is the idea that, at least locally, a flat connection $\nabla$ on a smooth manifold $\mathcal{N}$ admits a basis of vector fields $\{X_{j}\}_{j=1,...,n=\mathrm{dim}(\mathcal{N})}$ such that $\nabla_{X_{j}}X_{k}=0$ for all $j,k$.
If the vector fields are globally defined so that $\{X_{j}\}_{j=1,...,n=\mathrm{dim}(\mathcal{N})}$ is a basis for the module of vector fields on $\mathcal{N}$ (note that this requires $\mathcal{N}$ to be parallelizable), then the condition $\nabla_{X_{j}}X_{k}=0$ is what makes $\nabla$ be the so-called teleparallel connection associated with $\{X_{j}\}_{j=1,...,n=\mathrm{dim}(\mathcal{N})}$.
Moreover, a teleparallel connection associated with a basis $\{X_{j}\}_{j=1,...,n=\mathrm{dim}(\mathcal{N})}$  of vector fields is uniquely determined by the condition $\nabla_{X_{j}}X_{k}=0$ for all $j,k$.
In general, however, teleparallel connections need not be flat (i.e., curvature-free), nor even torsion-free (indeed, they are torsion-free if and only if the basis vector fields mutually commute as it can be easily seen from the very definition of the torsion tensor).
In almost all the cases arising from Classical and Quantum Information Geometry, it turns out that there is one preferred teleparallel connection which is basically the mixture connection arising from the convex structure characteristic of both the space of probability distributions and the space of quantum states.

Motivated by the previous discussion, in section \ref{sec: dually-weitzenbock manifolds}, given a fixed teleparallel connection on a smooth Riemannian manifold $(\mathcal{N},G)$,  we investigate the properties of the $G$-dual connection $\nabla^{*}$ and show that it must necessarily be itself a teleparallel connection.
We call any such pair $(\nabla,\nabla^{*})$ a \grit{$G$-dual teleparallel pair}.
Moreover, we also investigate the symmetry properties of the tensor $T$ generalizing the Amari-\v{C}encov tensor to the case of $G$-dual teleparallel pairs.
Then, in section \ref{sec: classical case} we review the case of the mixture and exponential connection on the interior $\Delta_{n}$ of the n-simplex highlighting the teleparallel structure of both connections, and showing how the teleparallel structure of the exponential connection is connected with a particular homogeneous manifold structure for $\Delta_{n}$.
In section \ref{sec: quantum case} we investigate the quantum case and characterize the $G_{f}$-dual teleparallel pairs $(\nabla,\nabla^{*})$ on the manifold $\stsph$ of faithful quantum states in finite dimensions endowed with a monotone quantum metric tensor $G_{f}$ in which $\nabla$ is   the mixture connection reflecting the convex structure of $\stsph$.
Finally, in section \ref{sec: conclusions} we gather some concluding remarks and future perspectives.

\section{G-dual teleparallel pairs}\label{sec: dually-weitzenbock manifolds}

Let $\mathcal{N}$ be a parallelizable smooth manifold and denote with $\mathfrak{X}(\mathcal{N})$ the module of smooth vector fields on $\mathcal{N}$.
Let $\{X_{j}\}_{j=1,...,n}$, with $n=\mathrm{dim}(\mathcal{N})$, be a global basis of vector fields on $\mathcal{N}$.
Define the \textit{teleparallel connection}\footnote{The term \textit{teleparallel}   my be translated with 'distantly parallel', and its use is mainly due to the fact that the affine connection we are interested in is used in the context of the so-called \textit{Teleparallel Gravity} \cite{A-P-2013}.
Note, however, that the teleparallel connection is also known as the \textbf{Weitzenböck connection} of $\{X_{j}\}_{j=1,...,n}$.} $\nabla$ associated with $\{X_{j}\}_{j=1,...,n}$ by setting
\be\label{eqn: weitzenbock connection}
\nabla_{X_{j}}X_{k}:=0 \quad \forall\;j,k=1,...,n. 
\ee
It is clear that every integral curve of $X_{j}$ is a geodesic of $\nabla$ for every $j=1,...,n$.
Moreover, since geodesics with fixed initial conditions are unique, every geodesic of $\nabla$ starting at $x\in\mathcal{N}$ with initial velocity $v_{x}\in T_{x}\mathcal{N}$ can be realized as an integral curve starting at $x$ of the vector field $X=a^{j}X_{j}$ with $a^{j}\in\mathbb{R}$ for all $j=1,...,n$ such that $v_{x}=X(x)$.

The torsion tensor $T^{\nabla}$ of $\nabla$, in general, reads
\be\label{eqn: torsion tensor}
T^{\nabla}(Z,W):=\nabla_{Z}W - \nabla_{W}Z - [Z,W]
\ee
for every $Z,W\in\mathfrak{X}(\mathcal{N})$.
Therefore, equation \eqref{eqn: weitzenbock connection} implies
\be\label{eqn: torsion teleparallel connection}
T^{\nabla}(X_{j},X_{k})=  [X_{k},X_{j}]
\ee
for every $j,k=1,...,n$, which means that, from the perspective of the basis $\{X_{j}\}_{j=1,...,n}$, the torsion tensor is connected with the commutator of the basis vector fields.
Of course, from equation \eqref{eqn: torsion teleparallel connection} it follows that $\nabla$ is torsion-free if and only if the basis vector fields mutually commute.

The curvature tensor $R^{\nabla}$ of $\nabla$ reads
\be
\left(R^{\nabla}(Z,W)\right)(V):=\nabla_{Z}\left(\nabla_{W} V\right) -\nabla_{W}\left(\nabla_{Z} V\right) - \nabla_{[Z,W]}V
\ee
for every $Z,W,V\in\mathfrak{X}(\mathcal{N})$.
Therefore, again equation \eqref{eqn: weitzenbock connection} implies
\be
\left(R^{\nabla}(X_{j},X_{k})\right)(X_{l}):= - \nabla_{[X_{j},X_{k}]}X_{l}=0
\ee
for every $j,k,l=1,...,n$, because $\{X_{j}\}_{j=1,...,n}$ is a basis of vector fields and thus $[X_{j},X_{k}]=F^{l}X_{l}$ with some smooth functions $F^{l}$ globally defined on $\mathcal{N}$.
We thus conclude that if  $\nabla$ is torsion-free  then it is necessarily flat (in the sense of being free of curvature and torsion).

\vsp

Let $G$ be a smooth Riemannian metric tensor on $\mathcal{N}$.
Motivated by the theory of Statistical Manifolds,  we want to investigate the pairs $(\nabla,\nabla^{*})$ of $G$-dual connections on $\mathcal{N}$ for which both $\nabla$ and $\nabla^{*}$ are teleparallel connections on $\mathcal{N}$.
We call any such pair $(\nabla,\nabla^{*})$ a \grit{$G$-dual teleparallel pair}.

At this purpose, it is instrumental that  we understand how $\nabla$ behaves on differential one-forms.
First of all, we consider a basis $\{\theta^{j}\}_{j=1,...,n}$ of  globally-defined differential  one-forms on $\mathcal{N}$ such that
\be\label{eqn: dual basis}
\theta^{j}(X_{k})=C^{j}_{k} .
\ee
where $C^{j}_{k}$ is a number for every $j,k=1,...,n$.
We call any such basis \grit{almost dual} to $\{X_{j}\}_{j=1,...,n}$.
In particular, when $C^{j}_{k}=\delta^{j}_{k}$ we obtain precisely the dual basis of  $\{X_{j}\}_{j=1,...,n}$.
We may think of $C_{jk}$ as a scaled version of $\delta_{jk}$ where the value at $j=k$ actually depends on $j$ and it is not necessarily always 1.
The fact that all the constructions presented in the work are valid for $C_{jk}$ instead of just $\delta_{jk}$ simply reflects the fact that the choice of each single vector field in the basis is determined up to a constant.

According to the general theory of affine connections \cfr{\cite{Besse-1987}}, the behaviour of $\nabla$ on the almost dual basis $\{\theta^{j}\}_{j=1,...,n}$ is encoded in the equation 
\be\label{eqn: weitzenbock connection on dual basis}
\left(\nabla_{Z}\theta^{j}\right)(W) = \nabla_{Z}\left(\theta^{j}(W)\right) -  \theta^{j}\left(\nabla_{Z}W\right)
\ee
where $Z,W$ are arbitrary vector fields on $\mathcal{N}$.
In particular, when $Z=X_{k}$ and $W=X_{l}$, equation \eqref{eqn: weitzenbock connection on dual basis} becomes
\be\label{eqn: weitzenbock connection on dual basis 2}
\left(\nabla_{X_{k}}\theta^{j}\right)(X_{l}) =   0 \quad \forall\;j,k,l=1,...,n
\ee
because of equation \eqref{eqn: weitzenbock connection}, and because the covariant derivatives of the $C_{jk}$'s in equation \eqref{eqn: dual basis} vanish being the $C_{jk}$'s constant.

Equation \eqref{eqn: weitzenbock connection on dual basis 2} is equivalent to
\be\label{eqn: weitzenbock connection on dual basis 4}
\nabla_{Z}\theta^{j}  =   0 \quad \forall\;j  =1,...,n \quad \;\forall \;Z\in\mathfrak{X}(\mathcal{N})
\ee
because $\{X_{j}\}_{j=1,...,n}$ is a basis for $\mathfrak{X}(\mathcal{N})$ and, for every smooth function $f$ on $\mathcal{N}$, it holds $\nabla_{fZ}\alpha=f\nabla_{Z}\alpha$ for every differential one-form $\alpha$ on $\mathcal{N}$. 

Since the Riemannian metric tensor $G$ is non-degenerate, for every $\theta^{j}$ we can define its associated $G$-gradient (or simply 'gradient' if there is no risk of confusion) vector field $Y_{j}$ uniquely determined by
\be\label{eqn: gradient vector field}
\theta^{j}(Z) = G(Y_{j},Z)
\ee
for every $Z\in\mathfrak{X}(\mathcal{N})$.
Note that $\{Y_{j}\}_{j=1,...,n}$ is a basis of globally-defined vector fields on $\mathcal{N}$.
Setting $Z=Y_{k}$ and $W=X_{l}$ and exploiting equation \eqref{eqn: weitzenbock connection on dual basis 4}, it follows that equation \eqref{eqn: weitzenbock connection on dual basis} becomes
\be\label{eqn: dually weitzenbock 1}
0= \theta^{j}(\nabla_{Y_{k}}X_{l}) 
\ee
for all $j,k,l=1,...,n,$ which is equivalent to
\be\label{eqn: dually weitzenbock 2}
\nabla_{Y_{k}}X_{l}=0  
\ee
for all $k,l=1,...,n$.

With these elements at our disposal, we now characterize the affine connection $\nabla^{*}$ which is $G$-dual to the teleparallel connection $\nabla$.
In particular, it is proved that $\nabla^{*}$ is the teleparallel connection associated with $\{Y_{j}\}_{j=1,...,n}$.
It is worth noting that the content of proposition \ref{prop: teleparallel dual connection} is essentially the same as that of Theorem 10 in \cite{Z-K-2019}.

\begin{proposition}\label{prop: teleparallel dual connection}
	Let $\mathcal{N}$ be a parallelizable smooth manifold of dimension $n=\mathrm{dim}(\mathcal{N})$.
	Let   $\{X_{j}\}_{j=1,...,n}$ be the global basis of vector fields on $\mathcal{N}$, let  $\nabla$ be its associated teleparallel connection, and let $\{\theta^{j}\}_{j=1,...,n}$ be an almost dual basis of differential  one-forms  for $\{X_{j}\}_{j=1,...,n}$ \cfr{equation \eqref{eqn: dual basis}}.
	For any smooth Riemannian metric tensor  $G$ on $\mathcal{N}$, let $\{Y_{j}\}_{j=1,...,n}$ be the global basis of vector fields on $\mathcal{N}$ made up of $G$-gradient vector fields of the $\theta^{j}$'s \cfr{equation \eqref{eqn: gradient vector field}}.
	Then, the $G$-dual affine connection $\nabla^{*}$ of $\nabla$ \cfr{equation \eqref{eqn: dual connections 0}} is the teleparallel connection associated with $\{Y_{j}\}_{j=1,...,n}$.
	
\end{proposition}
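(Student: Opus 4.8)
The plan is to exploit the fact, recalled just before the statement, that a teleparallel connection is \emph{uniquely} determined by the requirement that it annihilate the products of the frame defining it. Since the $G$-dual connection $\nabla^{*}$ of $\nabla$ exists and is uniquely fixed by the duality identity \eqref{eqn: dual connections 0}, it is enough to verify that $\nabla^{*}$ satisfies $\nabla^{*}_{Y_{k}}Y_{l}=0$ for all $k,l=1,\dots,n$: the uniqueness of the teleparallel connection associated with $\{Y_{j}\}_{j=1,\dots,n}$ then forces $\nabla^{*}$ to coincide with it. So the whole proof reduces to computing $\nabla^{*}_{Y_{k}}Y_{l}$.

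To establish $\nabla^{*}_{Y_{k}}Y_{l}=0$, I would test this vector field against an arbitrary $W\in\mathfrak{X}(\mathcal{N})$ through the metric $G$ and invoke non-degeneracy. Writing the duality relation \eqref{eqn: dual connections 0} with $X=Y_{k}$, placing $Y_{l}$ in the last slot (the one acted on by $\nabla^{*}$) and $W$ in the middle slot, gives
\be
Y_{k}\big(G(W,Y_{l})\big)=G\big(\nabla_{Y_{k}}W,\,Y_{l}\big)+G\big(W,\,\nabla^{*}_{Y_{k}}Y_{l}\big).
\ee
The key move is to convert every pairing of the form $G(\,\cdot\,,Y_{l})$ into the one-form $\theta^{l}$ via the defining relation \eqref{eqn: gradient vector field} and the symmetry of $G$, namely $G(W,Y_{l})=\theta^{l}(W)$ and $G(\nabla_{Y_{k}}W,Y_{l})=\theta^{l}(\nabla_{Y_{k}}W)$. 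After this substitution the identity becomes
\be
G\big(W,\,\nabla^{*}_{Y_{k}}Y_{l}\big)=Y_{k}\big(\theta^{l}(W)\big)-\theta^{l}\big(\nabla_{Y_{k}}W\big).
\ee

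Now I would recognize the right-hand side: by the rule \eqref{eqn: weitzenbock connection on dual basis} describing the action of $\nabla$ on one-forms, it is exactly $\big(\nabla_{Y_{k}}\theta^{l}\big)(W)$. But \eqref{eqn: weitzenbock connection on dual basis 4} has already shown that $\nabla_{Z}\theta^{l}=0$ for every $Z\in\mathfrak{X}(\mathcal{N})$, and in particular $\nabla_{Y_{k}}\theta^{l}=0$. Hence $G\big(W,\nabla^{*}_{Y_{k}}Y_{l}\big)=0$ for all $W\in\mathfrak{X}(\mathcal{N})$, and the non-degeneracy of $G$ yields $\nabla^{*}_{Y_{k}}Y_{l}=0$, which is what we needed.

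The computation is short; the only genuinely non-routine step — and hence the crux — is the dictionary between the metric pairing against the gradient field $Y_{l}$ and the evaluation of the one-form $\theta^{l}$, since it is precisely this translation that turns the defining equation of $\nabla^{*}$ into the statement that $\nabla$ parallelizes the almost dual coframe $\{\theta^{j}\}_{j=1,\dots,n}$. I expect no substantive obstacle beyond bookkeeping the correct slot of $G$ carrying $Y_{l}$; it is worth noting that the auxiliary identity $\nabla_{Y_{k}}X_{l}=0$ from \eqref{eqn: dually weitzenbock 2} is not even required, the entire argument resting on $\nabla\theta^{l}=0$ together with non-degeneracy of $G$.
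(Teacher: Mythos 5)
Your proposal is correct and takes essentially the same route as the paper: both arguments verify $\nabla^{*}_{Y_{k}}Y_{l}=0$ directly from the duality identity \eqref{eqn: dual connections 0} together with the fact that $\nabla$ parallelizes the almost dual coframe, and then identify $\nabla^{*}$ with the teleparallel connection of $\{Y_{j}\}_{j=1,...,n}$ by uniqueness. The only difference is bookkeeping: you test against an arbitrary $W$ and invoke $\nabla_{Z}\theta^{l}=0$ \eqref{eqn: weitzenbock connection on dual basis 4} as a single lemma, whereas the paper tests against the basis vectors $X_{k}$ and uses the equivalent pair of facts $Y_{j}(C^{k}_{l})=0$ and $\nabla_{Y_{j}}X_{k}=0$ \eqref{eqn: dually weitzenbock 2}, which are precisely the two terms in the expansion of $(\nabla_{Y_{j}}\theta^{l})(X_{k})$.
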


\begin{proof}
	
	The $G$-dual connection $\nabla^{*}$ is uniquely characterized by   
	\be\label{eqn: dual connections}
	Z\left(G(W,V)\right)=G\left(\nabla_{Z}W,V\right) + G\left(W,\nabla_{Z}^{*}V\right)
	\ee
	for every $Z,W,V\in\mathfrak{X}(\mathcal{N})$.
	Note that 
	\be\label{eqn: dually weitzenbock 3}
	Y_{j}\left(G(Y_{k},X_{l})\right)\stackrel{\mbox{\eqref{eqn: gradient vector field}}}{=}Y_{j}\left(\theta^{k}(X_{l})\right)\stackrel{\mbox{\eqref{eqn: dual basis}}}{=}Y_{j}(C^{k}_{l})=0 .
	\ee
	Therefore, setting $Z=Y_{j}$, $W=X_{k}$, and $V=Y_{l}$, equation \eqref{eqn: dual connections} becomes
	\be \label{eqn: dually weitzenbock 4}
	0\stackrel{\mbox{\eqref{eqn: dually weitzenbock 3}}}{=}Y_{j}\left(G(X_{k},Y_{l})\right)=G\left(\nabla_{Y_{j}}X_{k},Y_{l}\right) + G\left(X_{k},\nabla_{Y_{j}}^{*}Y_{l}\right)\stackrel{\mbox{\eqref{eqn: dually weitzenbock 2}}}{=}  G(X_{k},\nabla_{Y_{j}}^{*}Y_{ l})  
	\ee
	for all $j,k,l=1,...,n$, which is equivalent to
	\be\label{eqn: dually weitzenbock 5}
	\nabla_{Y_{j}}^{*}Y_{ l} =0 
	\ee
	for all $j,l=1,...,n$ because $\{X_{j}\}_{j=1,...,n}$ is a basis of vector fields on $\mathcal{N}$.
	We thus conclude that the $G$-dual connection $\nabla^{*}$   is precisely the teleparallel connection determined by the basis $\{Y_{j}\}_{j=1,...,n}$ of gradient vector fields on $\mathcal{N}$ associated with the almost dual basis $\{\theta^{j}\}_{j=1,...,n}$  of    $\{X_{j}\}_{j=1,...,n}$ through $G$ as claimed.
	\qed 
\end{proof}

\begin{remark}\label{rem: dually dual teleparallel connection}
	We can introduce the basis $\{\alpha^{j}\}_{j=1,...,n}$ of  differential one-forms where each $\alpha^{j}$ is the $G$-dual of $X_{j}$ according to
	\be\label{eqn: gradient vector field 2}
	\alpha^{j}(Z)= G(X_{j},Z)
	\ee
	for every $Z\in\mathfrak{X}(\mathcal{N})$.
	Then, it follows that $\{\alpha^{j}\}_{j=1,...,n}$ is an almost dual basis of $\{Y_{j}\}_{j=1,...,n}$ because
	\be\label{eqn: dually dual basis}
	\alpha^{j}(Y_{k})\stackrel{\mbox{\eqref{eqn: gradient vector field 2}}}{=}G(X_{j},Y_{k})\stackrel{\mbox{\eqref{eqn: gradient vector field}}}{=}\theta^{k}(X_{j})\stackrel{\mbox{\eqref{eqn: dual basis}}}{=}C^{k}_{j}.
	\ee
	Consequently, proceeding in perfect analogy with what is done to obtain equation \eqref{eqn: dually weitzenbock 2}, we obtain
	\be\label{eqn: dually dual weitzenbock}
	\nabla_{X_{k}}^{*}Y_{l}=0  
	\ee
	for all $k,l=1,...,n$.
\end{remark}

\begin{remark}\label{rem: closed differentials and vanishing torsion}
	Recalling the formula
	\be
	\mathrm{d}\beta(Z,W)=Z(\beta(W)) - W(\beta(Z)) - \beta([Z,W])
	\ee
	valid for every differential one-form $\beta$ and every pair of vector fields $Z,W$ on $\mathcal{N}$, we immediately see that
	\be\label{eqn: differential and commutators 1}
	\begin{split}
		\mathrm{d}\theta^{j}(X_{k},X_{l})&=X_{k}(\theta^{j}(X_{l})) - X_{k}(\theta^{j}(X_{l})) -\theta^{j}([X_{k},X_{l}])\stackrel{\mbox{\eqref{eqn: dual basis}}}{=}-\theta^{j}([X_{k},X_{l}]),
	\end{split}
	\ee
	which means that  $\nabla$ is torsion-free \cfr{equation \eqref{eqn: torsion teleparallel connection}}  if and only if $\{\theta^{j}\}_{j=1,...,n}$ is a basis of closed differential one-forms.
	Similarly,    we have
	\be\label{eqn: differential and commutators 2}
	\begin{split}
		\mathrm{d}\alpha^{j}(Y_{k},Y_{l})&=Y_{k}(\alpha^{j}(Y_{l})) - Y_{l}(\alpha^{j}(Y_{k})) -\alpha^{j}([Y_{k},Y_{l}])\stackrel{\mbox{\eqref{eqn: dually dual basis}}}{=}-\alpha^{j}([Y_{k},Y_{l}])
	\end{split}
	\ee
	which means that  $\nabla^{*}$ is torsion-free \cfr{equation \eqref{eqn: torsion teleparallel connection}}  if and only if $\{\alpha^{j}\}_{j=1,...,n}$ is a basis of closed differential one-forms.

\end{remark}

If $(\nabla,\nabla^{*})$ is a $G$-dual  pair on $(\mathcal{N},G)$ and $\nabla^{G}$ is the Levi-Civita connection of $G$, we may introduce the covariant $(0,3)$ tensor $T$ setting
\be\label{eqn: 3-tensor 0}
T(Z,W,V):=G(\nabla_{Z}W,V) - G(\nabla^{G}_{Z}W,V) 
\ee
for every $Z,W,V\in\mathfrak{X}(\mathcal{N})$.
Since $\nabla^{*}$ is the $G$-dual connection of $\nabla$, equation \eqref{eqn: 3-tensor 0} can be re-written in terms of $\nabla^{*}$ as
\be\label{eqn: 3-tensor 1}
T(Z,W,V):= G(W,\nabla^{G}_{Z}V)  -G(W,\nabla_{Z}^{*}V) .
\ee
If both $\nabla$ and $\nabla^{*}$ are torsion-free, then $T$ is completely symmetric and it is precisely the Amari-\v{C}encov tensor of the Statistical Manifold $(\mathcal{N},G,T)$.
In general, however, $T$ is no longer completely symmetric and we now investigate its properties when $(\nabla,\nabla^{*})$ is  a $G$-dual teleparallel pair.

\begin{proposition}
	Let $(\nabla,\nabla^{*})$ be  a $G$-dual teleparallel pair on the real, finite-dimensional, smooth, parallelizable Riemannian manifold $(\mathcal{N},G)$ of dimension $\mathrm{dim}(\mathcal{N})=n$, and let $T$ be the covariant $(0,3)$ tensor defined in equation \eqref{eqn: 3-tensor 0} (or equation \eqref{eqn: 3-tensor 1}).
	Then:
	\begin{itemize}
		\item $T$ is symmetric in the first two entries if and only if  $\nabla$ is torsion-free  or, equivalently, if and only if $\{\theta^{j}\}_{j=1,...,n}$ is a basis of closed differential one-forms;
		\item  $T$ is symmetric in the first and third entries if and only if  $\nabla^{*}$ is torsion-free  or, equivalently, if and only if $\{\alpha^{j}\}_{j=1,...,n}$ is a basis of closed differential one-forms;
		\item $T$ is symmetric in the second and third entries if and only if $\nabla^{*}_{X_{j}}X_{k}=0$ for every $j,k=1,...,n$, or, equivalently, if and only if $\nabla_{Y_{j}}Y_{k}=0$ $j,k=1,...,n$.
	\end{itemize}
	
\end{proposition}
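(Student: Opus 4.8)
The plan is to read the single tensor $T$ through its two equivalent expressions \eqref{eqn: 3-tensor 0} and \eqref{eqn: 3-tensor 1}, and to test each of the three symmetries by antisymmetrising $T$ in the relevant pair of slots. The only external inputs are the defining relation \eqref{eqn: dual connections} of the $G$-dual pair, the metric compatibility of $\nabla^{G}$, and the torsion-freeness of $\nabla^{G}$. Since the two reformulations in terms of closed one-forms are already contained in Remark~\ref{rem: closed differentials and vanishing torsion}, I only need to match each symmetry with a torsion or a parallelism condition.

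For the first item I would compute $T(Z,W,V)-T(W,Z,V)$ from \eqref{eqn: 3-tensor 0}. Using that $\nabla^{G}$ is torsion-free, the Levi-Civita contribution collapses to $-G([Z,W],V)$, while the $\nabla$-part is $G(\nabla_{Z}W-\nabla_{W}Z,V)$; subtracting and inserting \eqref{eqn: torsion tensor} gives $T(Z,W,V)-T(W,Z,V)=G(T^{\nabla}(Z,W),V)$, so by non-degeneracy of $G$ this symmetry holds if and only if $T^{\nabla}=0$. The second item is strictly parallel: antisymmetrising in the first and third slots and using instead \eqref{eqn: 3-tensor 1}, the $\nabla^{G}$-terms combine into $G(W,[Z,V])$ and cancel against the commutator inside the torsion, leaving $T(Z,W,V)-T(V,W,Z)=-G(W,T^{\nabla^{*}}(Z,V))$, so this symmetry is equivalent to $T^{\nabla^{*}}=0$. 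In both cases the closed-form statements follow by quoting Remark~\ref{rem: closed differentials and vanishing torsion}.

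The third item is the delicate one and is where I expect the real work. Here I would antisymmetrise in the second and third slots but deliberately mix the two presentations: expand $T(Z,W,V)$ with \eqref{eqn: 3-tensor 0} and $T(Z,V,W)$ with \eqref{eqn: 3-tensor 1}. The two Levi-Civita terms are then $G(\nabla^{G}_{Z}W,V)$ and $G(V,\nabla^{G}_{Z}W)$, hence equal by symmetry of $G$, and they cancel; what survives is $T(Z,W,V)-T(Z,V,W)=G\big(\nabla_{Z}W+\nabla^{*}_{Z}W-2\nabla^{G}_{Z}W,\,V\big)$. By non-degeneracy, symmetry in the second and third entries is therefore equivalent to the averaged-connection identity $\nabla_{Z}W+\nabla^{*}_{Z}W=2\nabla^{G}_{Z}W$ for all $Z,W$.

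It then remains to convert this identity into the stated frame conditions, and this bridge is where I expect the main difficulty. On one side I would characterise the vanishing $\nabla^{*}_{X_{j}}X_{k}=0$ intrinsically: since $\nabla_{Z}X_{k}=0$ for every $Z$ (a consequence of \eqref{eqn: weitzenbock connection} and the fact that $\{X_{j}\}$ is a global basis), relation \eqref{eqn: dual connections} gives $G(X_{l},\nabla^{*}_{Z}X_{k})=Z\big(G(X_{l},X_{k})\big)$, whence $\nabla^{*}_{X_{j}}X_{k}=0$ for all $j,k$ is exactly the constancy of the Gram coefficients $G(X_{j},X_{k})$. On the dual side, where $\nabla^{*}_{Z}Y_{k}=0$ by \eqref{eqn: dually weitzenbock 5}, the same computation yields $G(Y_{l},\nabla_{Z}Y_{k})=Z\big(G(Y_{l},Y_{k})\big)$, so $\nabla_{Y_{j}}Y_{k}=0$ is the constancy of $G(Y_{j},Y_{k})$; and the two coincide because the Gram matrices of $\{X_{j}\}$ and $\{Y_{j}\}$ are conjugate through the constant matrix $C^{k}_{j}=G(X_{j},Y_{k})$ \cfr{equation \eqref{eqn: dually dual basis}}, so one is constant precisely when the other is. The delicate and decisive step is then to reconcile this frame-wise vanishing with the averaged-connection identity obtained above: evaluating that identity on $Z=X_{j}$, $W=X_{k}$ only gives $\nabla^{*}_{X_{j}}X_{k}=2\nabla^{G}_{X_{j}}X_{k}$, so the passage to $\nabla^{*}_{X_{j}}X_{k}=0$ is not immediate and must be argued through the constancy of $G(X_{j},X_{k})$ rather than read off directly. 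Verifying that these two formulations are genuinely equivalent is the crux on which the third item turns, and is the part of the argument I would treat with the most care.
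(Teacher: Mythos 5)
Your first two items are correct and are, in substance, the paper's own proof: the paper carries out the same antisymmetrisation evaluated on the frames $\{X_{j}\}$ and $\{Y_{j}\}$, whereas you write the identities $T(Z,W,V)-T(W,Z,V)=G\left(T^{\nabla}(Z,W),V\right)$ and $T(Z,W,V)-T(V,W,Z)=-G\left(W,T^{\nabla^{*}}(Z,V)\right)$ for arbitrary arguments; both routes then invoke Remark~\ref{rem: closed differentials and vanishing torsion}.

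For the third item you have put your finger exactly on the problem, but the conclusion is the opposite of what you hope: the bridge you defer cannot be built, because the two conditions are genuinely inequivalent. Your computation is correct and, since $\nabla+\nabla^{*}-2\nabla^{G}$ is tensorial, symmetry of $T$ in the last two entries is equivalent to $\nabla^{*}_{X_{j}}X_{k}=2\nabla^{G}_{X_{j}}X_{k}$ for all $j,k$ (equivalently $T^{\nabla}+T^{\nabla^{*}}=0$, because $\tfrac{1}{2}(\nabla+\nabla^{*})$ is always metric-compatible and equals $\nabla^{G}$ precisely when it is torsion-free). This is not equivalent to $\nabla^{*}_{X_{j}}X_{k}=0$, which --- as you correctly observe --- amounts to constancy of the Gram matrix $G(X_{j},X_{k})$, i.e.\ to $\nabla^{*}=\nabla$. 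A counterexample to one implication: take $\mathcal{N}=SU(2)$ with a bi-invariant metric $G$, $\{X_{j}\}$ an orthonormal left-invariant frame, and $\nabla$ its teleparallel connection; the Gram matrix is constant, so duality forces $\nabla^{*}_{X_{j}}X_{k}=0$ (indeed $\nabla^{*}=\nabla$ and $Y_{j}=X_{j}$), yet $T(X_{j},X_{k},X_{l})=-G(\nabla^{G}_{X_{j}}X_{k},X_{l})=-\tfrac{1}{2}G([X_{j},X_{k}],X_{l})$ is totally antisymmetric and nonzero, hence not symmetric in any pair of entries. A counterexample to the other implication: $\mathbb{R}^{2}$ with the Euclidean metric and the frame $X_{1}=\partial_{u}$, $X_{2}=\mathrm{e}^{u}\partial_{v}$, where one checks that $T$ is symmetric in its last two entries while $\nabla^{*}_{X_{1}}X_{2}=2\nabla^{G}_{X_{1}}X_{2}=2X_{2}\neq 0$.

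You could not close the gap because the paper's own proof of this bullet contains an algebraic slip, not because you are missing an idea. In \eqref{eqn: T-symmetricity 3} the Levi-Civita contributions do not cancel; carrying the computation through with \eqref{eqn: dual connections 0}, \eqref{eqn: weitzenbock connection} and \eqref{eqn: dual Levi-Civita connection} actually gives
\be
T(X_{j},X_{k},X_{l})-T(X_{j},X_{l},X_{k})=G\left(X_{l},\nabla^{*}_{X_{j}}X_{k}-2\nabla^{G}_{X_{j}}X_{k}\right),
\ee
and likewise \eqref{eqn: T-symmetricity 4} should read $T(Y_{j},Y_{k},Y_{l})-T(Y_{j},Y_{l},Y_{k})=G\left(2\nabla^{G}_{Y_{j}}Y_{l}-\nabla_{Y_{j}}Y_{l},Y_{k}\right)$. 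Your averaged-connection identity is therefore the correct form of the third bullet: the statement should be repaired by replacing $\nabla^{*}_{X_{j}}X_{k}=0$ with $\nabla^{*}_{X_{j}}X_{k}=2\nabla^{G}_{X_{j}}X_{k}$ (equivalently $\nabla_{Y_{j}}Y_{k}=2\nabla^{G}_{Y_{j}}Y_{k}$, equivalently $T^{\nabla}+T^{\nabla^{*}}=0$), rather than by searching for a cleverer argument that would force the stated frame condition.
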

\begin{proof}
	In this case it holds
	\be
	T(X_{j},X_{k},Y_{l})\stackrel{\mbox{\eqref{eqn: 3-tensor 0}\eqref{eqn: weitzenbock connection}}}{=} - G(\nabla^{G}_{X_{j}}X_{k},Y_{l}).
	\ee
	Consequently, since $\nabla^{G}$ is torsion-free, we have $\nabla^{G}_{X_{j}}X_{k}-\nabla^{G}_{X_{k}}X_{j}=[X_{j},X_{k}]$ and thus
	\be
	T(X_{j},X_{k},Y_{l})-T(X_{k},X_{j},Y_{l})=G\left([X_{k},X_{j}],Y_{l}\right)\stackrel{\mbox{\eqref{eqn: gradient vector field}}}{=}-\theta^{l}([X_{j},X_{k}])\stackrel{\mbox{\eqref{eqn: differential and commutators 1}}}{=}\mathrm{d}\theta^{j}(X_{k},X_{l})
	\ee
	showing that $T$ is symmetric in the first two entries if and only if  $\nabla$ is torsion-free  or, equivalently, if and only if $\{\theta^{j}\}_{j=1,...,n}$ is a basis of closed differential one-forms \cfr{remark \ref{rem: closed differentials and vanishing torsion}} as claimed.
	In particular, this is always true when we consider the mixture connection in both the classical and quantum case \cfr{sections \ref{sec: classical case} and \ref{sec: quantum case}}.

	Analogously, we have
	\be
	T(Y_{j},X_{k},Y_{l})\stackrel{\mbox{\eqref{eqn: 3-tensor 1}\eqref{eqn: weitzenbock connection}}}{=}   G(X_{k},\nabla^{G}_{Y_{j}}Y_{l}),
	\ee
	so that, again since $\nabla^{G}$ is torsion-free, we can use $\nabla^{G}_{Y_{j}}Y_{l}-\nabla^{G}_{Y_{l}}Y_{j}=[Y_{j},Y_{l}]$ and obtain
	\be\label{eqn: T-symmetricity 2}
	\begin{split}
		T(Y_{j},X_{k},Y_{l})-T(Y_{l},X_{k},Y_{j})&= G(X_{k},[Y_{j},Y_{l}])\stackrel{\mbox{\eqref{eqn: gradient vector field 2}}}{=}-\alpha^{k}([Y_{j},Y_{l}])\stackrel{\mbox{\eqref{eqn: differential and commutators 2}}}{=}\mathrm{d}\alpha^{k}(Y_{j},Y_{l}),
	\end{split}
	\ee
	which means that $T$ is symmetric in the first and third entries if and only if  $\nabla^{*}$ is torsion-free  or, equivalently, if and only if $\{\alpha^{j}\}_{j=1,...,n}$ is a basis of closed differential one-forms \cfr{remark \ref{rem: closed differentials and vanishing torsion}}.
	
	Finally, recalling that
	\be\label{eqn: dual Levi-Civita connection}
	Z(G(W,V))=G(\nabla^{G}_{Z}W,V) + G(W,\nabla^{G}_{Z}V)
	\ee
	for all $Z,W,V\in \mathfrak{X}(\mathcal{N})$ because $\nabla^{G}$ is its own $G$-dual connection, we have
	\be\label{eqn: T-symmetricity 3}
	\begin{split}
		T(X_{j},X_{k},X_{l})-T(X_{j},X_{l},X_{k})&\stackrel{\mbox{\eqref{eqn: 3-tensor 1}}}{=}G\left(X_{k},\nabla^{G}_{X_{j}}X_{l}- \nabla^{*}_{X_{j}}X_{l}\right) - G\left(X_{l},\nabla^{G}_{X_{j}}X_{k}- \nabla^{*}_{X_{j}}X_{k} \right)=\\
		&\stackrel{\mbox{\eqref{eqn: dual connections 0}\eqref{eqn: weitzenbock connection}\eqref{eqn: dual Levi-Civita connection}}}{=}    G\left(X_{l},\nabla^{*}_{X_{j}}X_{k} \right),
	\end{split}
	\ee
	which means that $T$ is symmetric in the second and third entries if and only if $\nabla^{*}_{X_{j}}X_{k}=0$ for every $j,k=1,...,n$.
	Moreover, it also holds
	\be\label{eqn: T-symmetricity 4}
	\begin{split}
		T(Y_{j},Y_{k},Y_{l})-T(Y_{j},Y_{l},Y_{k})&\stackrel{\mbox{\eqref{eqn: 3-tensor 0}}}{=}G\left(\nabla_{Y_{j}}Y_{k}-\nabla^{G}_{Y_{j}}Y_{k},Y_{l}\right) - G\left(\nabla_{Y_{j}}Y_{l}-\nabla^{G}_{Y_{j}}Y_{l},Y_{k}\right)=\\
		&\stackrel{\mbox{\eqref{eqn: dual connections 0}\eqref{eqn: dually weitzenbock 5}\eqref{eqn: dual Levi-Civita connection}}}{=}  - G\left(\nabla_{Y_{j}}Y_{l} ,Y_{k}\right)
	\end{split}
	\ee
	which means that $T$ is symmetric in the second and third entries if and only if $\nabla_{Y_{j}}Y_{k}=0$ for every $j,k=1,...,n$.
	We thus conclude that 
	\be
	\nabla^{*}_{X_{j}}X_{k}=0\;\;\forall\;j,k=1,...,n \,\Longleftrightarrow\; \nabla_{Y_{j}}Y_{k}=0 \;\;\forall\;j,k=1,...,n.
	\ee
	Of course, if $\nabla$ and $\nabla^{*}$ are both torsion-free then it can easily be checked that $T$ is completely symmetric.
\end{proof}

\section{G-dual teleparallel pairs in Classical Information Geometry} \label{sec: classical case}


Let $\overline{\Delta_{n}}$  be the  n-simplex in $\mathbb{R}^{n}$:
\be
\overline{\Delta_{n}}:=\left\{\vec{p}=(p^{1},...,p^{n})\in\mathbb{R}^{n}\,|\quad p^{j}\geq 0\;\forall\; j=1,...,n,\mbox{ and } \sum_{j=1}^{n}p^{j}=1\right\},
\ee
and $\Delta_{n}$ its interior 
\be
\Delta_{n}:=\left\{\vec{p}=(p^{1},...,p^{n})\in\overline{\Delta_{n}}\,|\quad p^{j}>0\;\forall\; j=1,...,n \right\}.
\ee
Note that both $\overline{\Delta_{n}}$ and $\Delta_{n}$ are convex subsets of $\mathbb{R}^{n}$.
If $\mathcal{X}_{n}$ denote a discrete set with $n$ elements, then  $\overline{\Delta_{n}}$  can be interpreted as the  set of  probability distributions  on $\mathcal{X}_{n}$, and $\Delta_{n}$ as the subset of nowhere-vanishing probability distributions on $\mathcal{X}_{n}$.
The space $\Delta_{n}$ is a smooth manifold which is also  parallelizable   because it is essentially an open subset of the affine hyperplane determined by the normalization condition $\sum_{j=1}^{n}p^{j}=1$.

A tangent vector at $\vec{p}\in\Delta_{n}$   can be identified with an element $\vec{a}\in\mathbb{R}^{n}$ such that $\sum_{j=1}^{n}a^{j}=0$.
If we choose $(n-1)$ linearly independent vectors $\vec{a}_{1},...,\vec{a}_{n-1}\in\mathbb{R}^{n}$ such that $\sum_{j=1}^{n}a^{j}_{k}=0$ for all $k=1,...,(n-1)$, we  can define a basis $\{L_{k}\}_{k=1,...,(n-1)}$   of global vector fields on $\Delta_{n}$ setting  
\be\label{eqn: traslation vector fields on the simplex}
L_{k}(\vec{p})=\vec{a}_{k} .
\ee
The notational change from $X_{k}$ to $L_{k}$ is made to remark the relation of these vector fields with the linear structure of $\mathbb{R}^{n}$, and thus with the convex structures of probability vectors.
Indeed, every $L_{k}$ generates a traslation on $\mathbb{R}^{n}$, and traslations are related with the linear structure of $\mathbb{R}^{n}$.

The teleparallel connection associated with $\{L_{k}\}_{k=1,...,(n-1)}$  is known as the mixture connection on $\Delta_{n}$ and it is often denoted by $\nabla^{m}$.
This affine connection is intimately connected with the convex structure of $\Delta_{n}$.
Indeed, as mentioned in section \ref{sec: dually-weitzenbock manifolds}, every geodesic of $\nabla^{m}$ starting at $\vec{p}$ and with intial velocity $\vec{v}=v^{k}a_{k}=v^{k}L_{k}(\vec{p})$, where $v^{k}\in\mathbb{R}$ for all $k=1,...,(n-1)$, can be realized as an integral curve of the vector field $V=v^{k}L_{k}$, and has the simple expression 
\be
\vec{p}_{\vec{v}}(t)=\vec{p} + t\, \vec{v} .
\ee
Therefore, given two arbitrary points $\vec{p}_{1},\vec{p}_{2}\in\Delta_{n}$, we can build a geodesic starting at $\vec{p}_{1}$ and ending at $\vec{p}_{2}$ at time $t=1$ setting $\vec{v}=\vec{p}_{2}-\vec{p}_{1}$ thus obtaining the segment $(1-t)\vec{p}_{1} + t\vec{p}_{2}$ describing all the possible convex combinations of $\vec{p}_{1}$ and $\vec{p}_{2}$.

It is perhaps seldom observed that, from the geometrical point of view, $\Delta_{n}$ is actually an homogeneous space of the  multiplicative Lie group $\mathbb{R}^{n}_{+}$, i.e., the product of $n$ copies of $\mathbb{R}_{+}$.
Specifically, let $(q^{1},...,q^{n})=\vec{q}\in\mathbb{R}^{n}_{+}$ and set
\be\label{eqn: geodesics of classical exponential connection}
\gamma(\vec{q},\vec{p}):=\frac{1}{N_{\vec{q}}^{\vec{p}}}\,\left(q^{1}p^{1},...,q^{n}p^{n}\right)
\ee
with $N_{\vec{q}}^{\vec{p}}=\sum_{j=1}^{n}q^{j}p^{j}$.
It is a matter of direct computation to check that $\gamma$ is indeed a smooth, transitive left action of  $\mathbb{R}^{n}_{+}$ on $\Delta_{n}$.
Then, setting $\vec{b}_{0}=(1,...,1)$ and selecting  $(n-1)$ linearly independent vectors $\vec{b}_{1},...,\vec{b}_{n-1}\in\mathbb{R}^{n}$ such that $\sum_{j=1}^{n}b^{j}_{k}=0$ for all $k=1,...,(n-1)$, we obtain a basis $\{\vec{b}_{j}\}_{j=0,...,(n-1)}$ of the Lie algebra $\mathbb{R}^{n}$ of $\mathbb{R}^{n}$ such that the fundamental vector field of $\gamma$ associated with $\vec{b}_{0}$ identically vanishes, while the fundamental vector field of $\gamma$ associated with $\vec{b}_{j}$ reads
\be\label{eqn: gradient vector field e-connection n-simplex}
Y_{j}(\vec{p})=\left(\left(b^{1}_{j} - N_{\vec{b}_{j}}^{\vec{p}}\right)p^{1},...\,,\left(b^{n}_{j} - N_{\vec{b}_{j}}^{\vec{p}}\right)p^{n}\right) .
\ee
It turns out that the teleparallel connection determined by the basis $\{Y_{j}\}_{j=1,...,(n-1)}$ of vector fields on $\Delta_{n}$ is precisely the exponential connection which is dual to $\nabla^{m}$ with respect to the Fisher-Rao metric tensor.

According to \v{C}encov's pioneering work \cfr{\cite{Cencov-1982}}, the Riemannian manifold structure which is relevant in the context of Classical Information Geometry is the so-called Fisher-Rao metric tensor $G_{FR}$ given by
\be\label{eqn: F-R metric tensor}
\left(G_{FR}(X,Y)\right)(\vec{p}):=\langle X(\vec{p}),Y(\vec{p})\rangle_{\vec{p}}
\ee
where  $\langle \cdot , \cdot\rangle_{\vec{p}}$  is the weighted scalar product on $\mathbb{R}^{n}$ given by 
\be
\langle \vec{a},\vec{b}\rangle_{\vec{p}}:=\sum_{j=1}^{n}\frac{a^{j}b^{j}}{p^{j}}.
\ee
It is well-known that the $G_{FR}$-dual connection of $\nabla^{m}$ is the so-called exponential connection $\nabla^{e}$ \cfr{\cite{A-N-2000,Cencov-1982}}.
Moreover, from proposition \ref{prop: teleparallel dual connection} we know that $\nabla^{e}$ is the teleparallel connection associated with the gradient vector fields of an almost dual basis of  $\{L_{k}\}_{k=1,...,(n-1)}$.
At this purpose,  an almost dual basis $\{\theta^{k}\}_{k=1,...,(n-1)}$ of $\{L_{k}\}_{k=1,...,(n-1)}$ is associated with a choice of $(n-1)$ linearly independent vectors $\{\vec{b}_{k}\}_{k=1,...,(n-1)}$ such that $\sum_{j=1}^{n}b^{j}_{k}=0$ for all $k=1,...,(n-1)$.
Indeed, setting $\theta^{k}=\mathrm{d}l^{k}$ with 
\be
l^{k}(\vec{p}):=\vec{p}\cdot\vec{b}_{k} ,
\ee
it immediately follows that 
\be\label{eqn: dual basis simplex}
\theta^{k}(L_{j})=\frac{\mathrm{d}}{\mathrm{d}t}\left((\vec{p}+t\vec{a}_{j})\cdot\vec{b}_{k}\right)_{t=0}=\vec{a}_{j}\cdot\vec{b}_{k} ,
\ee
and the right hand side clearly does not depend on the point $\vec{p}$, thus showing that $\{\theta^{k}\}_{k=1,...,(n-1)}$ is indeed an almost dual basis of $\{L_{k}\}_{k=1,...,(n-1)}$.
Then, a direct computation using the very definition of gradient vector field \cfr{equation \eqref{eqn: gradient vector field}}, and equations \eqref{eqn: F-R metric tensor} and \eqref{eqn: dual basis simplex} shows that the gradient vector field associated with $\theta^{k}$   by the Fisher-Rao metric tensor $G_{FR}$  is precisely the vector field $Y_{k}$ given in equation  \eqref{eqn: gradient vector field e-connection n-simplex} as claimed.
Since the exponential connection $\nabla^{e}$ is the teleparallel connection determined by the basis $\{Y_{j}\}_{j=1,...,(n-1)}$, every geodesic of $\nabla^{e}$ starting at $\vec{p}$ and with intial velocity $\vec{v}=v^{j}a_{j}=v^{j}Y_{j}(\vec{p})$, where $v^{j}\in\mathbb{R}$ for all $j=1,...,(n-1)$, can be realized as an integral curve of the vector field $V=v^{j}Y_{j}$.
Then, since the $Y_{j}$'s are the fundamental vector fields of the action of $\mathbb{R}^{n}_{+}$ on $\Delta_{n}$ described in equation \eqref{eqn: geodesics of classical exponential connection}, it immediately follows that every integral curve of $V$ starting at $\vec{p}$ reads
\be\label{eqn: geodesics of classical exponential connection 2}
\vec{p}_{\vec{v}}(t)=\frac{1}{\sum_{j=1}^{n} \mathrm{e}^{tv^{j}}p^{j}}\,\left(\mathrm{e}^{tv^{1}}\,p^{1},...,\mathrm{e}^{tv^{n}}\,p^{n}\right),
\ee
from which it is evident that $\nabla^{e}$ is complete.

\section{G-dual teleparallel pairs in Quantum Information Geometry}\label{sec: quantum case}

The purpose of this section is to discuss the $G$-dual teleparallel pairs   that naturally arise in the context of finite-dimensional Quantum Information Geometry when the reference teleparallel connection comes from the natural convex structure of the space of quantum states, and the metric tensors are the so-called monotone metric tensors replacing the Fisher-Rao metric tensor in the quantum context and classified by Petz  \cite{Petz-1996}.

Let $\hh$ be a complex Hilbert space of dimension $n<\infty$.
Let  $\langle \cdot , \cdot \rangle$  denote the Hilbert space inner product on $\hh$, $\bh$ denote the space of bounded linear operators on $\hh$, and $\bhsa$ the space of self-adjoint elements in $\bh$.
The vector space $\bhsa$ is a real Hilbert space with respect to the so-called Hilbert-Schmidt product
\be
\langle \mathbf{a},\mathbf{b}\rangle_{HS}:=\Tr(\mathbf{a}\mathbf{b}),
\ee
where $\Tr$ is the Hilbert space trace.
We can always select a  basis $\{\sigma_{j}\}_{j=0,...,n^{2}-1}$ for $\bhsa$ in such a way that $\sigma_{0}=\mathbb{I}$, the identity operator on $\hh$, and  $\Tr(\sigma_{j})=0$ for all $j=1,...,n^{2}-1$.
When $\hh\cong\mathbb{C}^{2}$, the typical example of such a basis is the one in which  $\sigma_{j}$ is the j-th Pauli matrix for $j=1,2,3$.

Every element $\mathbf{a}\in\bhsa$ can then be expressed as
\be\label{eqn: expansion of s-a linear operators}
\mathbf{a}=x^{0}\sigma_{0} + \vec{x}\cdot\vec{\sigma}
\ee
where $\vec{x}=(x^{1},...,x^{n^{2}-1})\in\mathbb{R}^{n^{2}-1}$, $\vec{\sigma}=(\sigma^{1},...,\sigma^{n^{2}-1})$, and  $\cdot$ is the Euclidean scalar product.
The real numbers $(x^{0},\vec{x})$ determine a Cartesian coordinate system on $\bhsa$.

An element in $\bhsa$ is called positive  if   its spectrum lies on the non-negative half-line, and strictly positive when its spectrum lies on the positive half-line.
In particular, a strictly positive   operator is invertible.
It is customary to write $\rho\geq 0$ to denote a positive operator, and $\rho>0$ to denote a strictly positive one.
The space $\posh$ of strictly positive operators on $\hh$ is an open convex cone in $\bhsa$, which means it is an open submanifold of $\bhsa$, and thus a parallelizable smooth manifold since $\bhsa$ is a vector space.

The space of faithful quantum states of a finite-level quantum systems associated with $\hh$ is identified with the space of invertible density operators on $\hh$ given by
\be\label{eqn: stsph}
\stsph:=\left\{\rho\in\posh |\;\Tr(\rho)=1\right\}.
\ee
Accordingly, $\stsph$ is a codimension-one submanifold of $\posh$ which is also parallelizable  because it is essentially an open subset of the affine hyperplane in $\bhsa$ determined by the trace-normalization condition.
Moreover,  it is also a convex set.

The space of all quantum states $\overline{\stsph}$ is the norm-closure of $\stsph$ inside $\bh$ (or $\bhsa$), and it is a compact, convex subset of $\bhsa$ which is also a stratified manifold \cite{DA-F-2021,G-K-M-2005}.

Recalling equation \eqref{eqn: expansion of s-a linear operators} and the normalization condition in equation \eqref{eqn: stsph}, it is clear that every $\rho\in\overline{\stsph}$ is such that $x^{0}=\frac{1}{n}$.
However, it is not in general possible to explicitly find all the constraints the vector $\vec{x}$ in equation equation \eqref{eqn: expansion of s-a linear operators} must satisfy in order to ensure the positivity condition $\rho\geq 0$.
The two-dimensional case is the only case in which a direct computation of the eigenvalues shows that $\rho\geq 0$ is equivalent to $\vec{x}\cdot \vec{x}\leq\frac{1}{4}$.

Because of the normalization condition, and because of dimensional reasons related with the fact that $\stsph$ is a codimension-one submanifold of the open submanifold $\posh\subset\bhsa$, the tangent space at $\rho$ can be expressed as the linear span
\be
T_{\rho}\stsph \cong\mathrm{span}\left(\sigma_{1},...,\sigma_{n^{2}-1}\right) ,
\ee
and we can define the global basis $\{L_{k}\}_{k=1,...,n^{2}-1}$ of vector fields on $\stsph$ setting
\be\label{eqn: linear vector field quantum}
L_{k}(\rho)=\sigma_{k} .
\ee
Again, the notational change from $X_{k}$ to $L_{k}$ is made to emphasize the relation of these vector fields with the linear structure of $\bhsa$  \cfr{the comment after equation \eqref{eqn: traslation vector fields on the simplex}}.

The teleparallel connection associated with $\{L_{k}\}_{k=1,...,n^{2}-1}$  is known as the mixture connection on $\stsph$ and it is denoted as $\nabla^{m}$.
Very much like it happens in the classical setting for the mixture connection on the interior of the n-simplex, the mixture connection on $\stsph$ turns out to have a natural geometric interpretation in terms of the convex structure of $\stsph$.
Indeed, every geodesic of $\nabla^{m}$ starting at $\rho$ with initial tangent vector $\mathbf{v}=v^{k}\sigma_{k}=v^{k}L_{k}(\rho)$, where $v^{k}\in\mathbb{R}$ for every $k=1,...,(n^{2}-1)$,  can be realized as the integral curve of the vector field $V=v^{k}L_{k}$ \cfr{section \ref{sec: dually-weitzenbock manifolds}} which reads
\be\label{eqn: geodesic of mixture connection quantum case}
\rho_{\mathbf{v}}(t)= \rho + t\mathbf{v} .
\ee
Therefore, given two arbitrary points $\rho_{1},\rho_{2}\in\stsph$, we can build a geodesic starting at $\rho_{1}$ and ending at $\rho_{2}$ at time $t = 1$ by setting $\mathbf{v}=\rho_{2}-\rho_{1}$,  thus obtaining the segment $(1-t)\rho_{1} + t\rho_{2}$ describing all the possible convex combinations of  $\rho_{1}$ and $\rho_{2} $.
Moreover, it follows from equation \eqref{eqn: geodesic of mixture connection quantum case} that $\nabla^{m}$ is not complete on $\stsph$. 

In the context of Quantum Information Geometry,  and, contrary to the situation in Classical Information Geometry,  the Riemannian structure on $\stsph$ is not uniquely determined, and there is an infinite family of so-called \grit{quantum monotone metric tensors} \cfr{\cite{Petz-1996}}.
In particular, given an operator monotone function $f\colon(0,\infty)\ra(0,\infty)$  such that
\be\label{eqn: Petz function}
f(x)=xf(x^{-1}),\quad f(1)=1,
\ee
Petz's classification of quantum monotone metric tensor assures us that every such metric tensor $G_{f}$ is associated with $f$ by means of
\be\label{eqn: Petz metric 1}
\left(G_{f}(V,W)\right)(\rho)=\Tr \left(V_{\rho}\,T_{\rho}^{f} (W_{\rho})\right),
\ee
where $V,W$ are vector fields on $\stsph$,  and where  $T_{\rho}^{f}= (K^{f}_{\rho})^{-1}$ with 
\be\label{eqn: Petz metric 2}
K^{f}_{\rho}= f(L_{\rho}\,R_{\rho^{-1}})\,R_{ \rho },
\ee
where $L_{\rho}(\mathbf{a})=\rho\mathbf{a}$ and $R_{\rho}(\mathbf{a})=\mathbf{a}\rho$.
Note that both $L_{\rho}$ and $R_{\rho}$ are (invertible) superoperators on $\bh$.

Being self-adjoint, every $\rho$ can be diagonalized using an orthonormal basis $\{|j\rangle\}_{j=1,...,\mathrm{dim}(\hh)}$ of $\hh$.
Writing $\mathbf{e}^{\rho}_{lm}=| l \rangle\langle m|$, we have
\be
\rho=\sum_{j=1}^{n}\,p^{j}_{\rho}\,\mathbf{e}_{jj}^{\rho}.
\ee
Note that the collection of all $\mathbf{e}^{\rho}_{lm}$'s forms a basis of $\bh$.

We can also introduce the super-operators  $E_{kj}^{\rho}$ acting on $\bh$ according to
\be\label{eqn: rho super eigenprojectors}
E_{kj}^{\rho}\left( \mathbf{e}^{\rho}_{lm}\right)\,=\,\delta_{jl}\,\delta_{km}\mathbf{e}_{jk}^{\rho},
\ee
and it is then a matter of straightforward computation to check that
\be\label{eqn: K superoperator on eigenprojectors}
K^{f}_{\rho}=\sum_{j,k=1}^{n}\,p^{k}_{\rho}\,f\left(\frac{p^{j}_{\rho}}{p^{k}_{\rho}}\right)\,E_{kj}^{\rho} ,
\ee
and
\be\label{eqn: eigendecomposition of Petz superoperator}
T^{f}_{\rho}=\sum_{j,k=1}^{n}\,\left(p^{k}_{\rho}\,f\left(\frac{p^{j}_{\rho}}{p^{k}_{\rho}}\right)\right)^{-1}\,E_{kj}^{\rho}.
\ee
As a side remark, note that, for every vector field  $W$ such that its associated tangent vector $W_{\rho}$ at $\rho$ commutes with $\rho$, and for every vector field $V$, using equations \eqref{eqn: eigendecomposition of Petz superoperator}  and \eqref{eqn: F-R metric tensor} it follows that 
\be\label{eqn: Petz and F-R}
\left(G_{f}(V,W)\right)(\rho)=\left(G_{FR}(\vec{V}_{d},\vec{W}_{d})\right)(\vec{p}_{\rho})
\ee
with $\vec{p}_{\rho}=(p^{1}_{\rho},...,p^{n}_{\rho})$, and $\vec{V}_{d}(\vec{p}_{\rho}), \vec{W}_{d}(\vec{p}_{\rho})$ the vectors built out of the diagonal elements of $V_{\rho}$ and $W_{\rho}$ with respect to the basis of eigenvectors of $\rho$.
Note that equation \eqref{eqn: Petz and F-R} holds for every choice of the operator monotone function $f$.

From  proposition  \ref{prop: teleparallel dual connection} we know that the $G_{f}$-dual connection $\nabla^{f}$ of the mixture connection $\nabla^{m}$ is the teleparallel connection associated with the gradient vector fields of an almost dual basis of $\{L_{k}\}_{k=1,...,n}$.
An almost dual basis $\{\theta^{k}\}_{k=1,...,n^{2}-1}$ of  $\{L_{k}\}_{k=1,...,n^{2}-1}$ is then easily seen to be associated with a choice of basis $\{\omega_{k}\}_{k=0,...,n^{2}-1}$ for $\bhsa$ in such a way that $\omega_{0}=\mathbb{I}$ and  $\Tr(\omega_{k})=0$.
Indeed, setting $\theta^{k}:= \mathrm{d}l^{k}$ with 
\be
l^{k}(\rho):= \Tr(\rho\omega_{k}) 
\ee
the   expectation-value function associated with $\sigma_{k}$, a direct computation shows that
\be\label{eqn: almost dual basis quantum}
\theta^{k}(L_{j})=\frac{\mathrm{d}}{\mathrm{d}t}\left(l^{k}(\rho_{j}(t))\right)_{t=0}\stackrel{\mbox{\eqref{eqn: geodesic of mixture connection quantum case}}}{=}\Tr(\sigma_{j}\omega_{k}),
\ee
and the right-hand-side does not depend on the point $\rho$.
Then, we have
\be
\Tr(\sigma_{j}\omega_{k})\stackrel{\mbox{\eqref{eqn: almost dual basis quantum}}}{=}\left(\theta^{k}(L_{j})\right)(\rho)\stackrel{\mbox{\eqref{eqn: gradient vector field}}}{=}\left(G_{f}(L_{j},Y^{f}_{k})\right)(\rho)\stackrel{\mbox{\eqref{eqn: Petz metric 1}\eqref{eqn: linear vector field quantum}}}{=}\Tr \left(\sigma_{j}\,T_{\rho}^{f} (Y^{f}_{k}(\rho))\right)
\ee
for all $k=1,...,(n^{2}-1)$, from which it follows that the gradient vector field $Y^{f}_{k}$ associated with $\theta^{k}$ by means of $G_{f}$ must be such that
\be
\omega_{j}- T_{\rho}^{f} (Y^{f}_{k}(\rho))= c_{\rho}\mathbb{I}\equiv c_{\rho}\omega_{0}.
\ee
Therefore, since $T_{\rho}^{f}= (K^{f}_{\rho})^{-1}$, we have 
\be
Y_{k}^{f}(\rho)=K^{f}_{\rho}(\omega_{k} - c_{\rho} \mathbb{I})\stackrel{\mbox{\eqref{eqn: K superoperator on eigenprojectors}}}{=}K^{f}_{\rho}(\omega_{k} ) - c_{\rho}\rho,
\ee
with $c_{\rho}=\Tr(K^{f}_{\rho}(\omega_{k} ) )$ because it must be $\Tr(Y_{k}^{f}(\rho))=0$ since $Y_{k}^{f}(\rho)\in T_{\rho}\stsph$.

From equation \eqref{eqn: torsion teleparallel connection} we know that the torsion tensor of a teleparallel connection is connected with the commutator of the basis vector fields defining the teleparallel connection.
It is then relevant to review the results presented in \cite{Ciaglia-2020,C-DN-2021,C-DN-2022} in relation with the teleparallel connection $\nabla^{f}$  determined by $\{Y^{f}_{k}\}_{k=1,...,n}$.
In the above-mentioned works, even if their role  with respect to $G_{f}$-dual pairs was not known, the gradient vector fields associated with the expectation value functions by means of monotone quantum metric tensors have been studied in relation with the actions of Lie groups suitably extending the unitary group $\Uh$ and its canonical action 
\be\label{eqn: unitary group action}
\rho\mapsto \alpha(\mathbf{U},\rho)=\mathbf{U}\rho\mathbf{U}^{\dagger}
\ee
on $\stsph$, and a great deal of attention has been given to the commutators between these vector fields.

When the operator monotone function is  $f(x)=\kappa(x-1)/\ln(x)$ with $\kappa>0$, the Riemannian metric tensor $G_{f}$ reduces to the Bogoliubov-Kubo-Mori metric tensor \cfr{\cite{Petz-1996}} and the gradient vector fields mutually commute.
Consequently, the teleparallel connection $\nabla^{f}$ is torsion-free and flat.
This is not surprising because it is well-known  that the connection which is dual to $\nabla^{m}$ with respect to the Bogoliubov-Kubo-Mori metric tensor is the quantum version of the exponential connection which is torsion-free, flat, and also complete \cfr{\cite{F-M-A-2019,G-R-2001,Nagaoka-1995}}.
What is surprising is the following fact (we refer to \cite{Ciaglia-2020,C-DN-2021,C-DN-2022} for more details on the construction we are about to resume).
Together with  the fundamental vector fields of the standard action of $\Uh$ on $\stsph$ \cfr{equation \eqref{eqn: unitary group action}}, the gradient vector fields giving rise to connection $\nabla^{f}$ can be used to build an anti-representation of the Lie algebra of the cotangent group $T^{*}\Uh$.
This anti-representation integrates to a transitive and smooth left action on $\stsph$ which may be thought of as the quantum counterpart of the action $\gamma$ \cfr{equation \eqref{eqn: geodesics of classical exponential connection}} of $\mathbb{R}^{n}_{+}$ on $\Delta_{n}$.
In particular, every geodesic of $\nabla^{f}$ starting at $\rho$ with initial velocity $\mathbf{v}$ is of the form
\be\label{eqn: geodesics quantum exponential connection}
\rho_{\mathbf{v}}(t)=\frac{\mathrm{e}^{\ln\rho + t\mathbf{v}}}{\Tr\left(\mathrm{e}^{\ln\rho + t\mathbf{v}}\right)}
\ee
from which it clearly follows that $\nabla^{f}$ is complete on $\stsph$.
Moreover, when $[\mathbf{v},\rho]=\mathbf{0}$, the geodesic in equation \eqref{eqn: geodesics quantum exponential connection}, if diagonalized, has basically the form of the geodesic of the classical exponential connection given in  equation  \eqref{eqn: geodesics of classical exponential connection 2}.
Since both $\nabla^{m}$ and $\nabla^{f}$ are torsion-free in this case, the covariant tensor $T$ is completely symmetric and $(\stsph,G_{f},T)$ is a Statistical Manifold in the sense of \cite{Lauritzen-1987}.

It is also worth noting that, when $f(x)=\frac{\kappa(x-1)(x^{\kappa}+1)}{2(x^{\kappa}-1)}$ with $0<\kappa<1$, a similar instance manifests itself.
Indeed, again together with the fundamental vector fields of the standard action of $\Uh$ on $\stsph$ \cfr{equation \eqref{eqn: unitary group action}}, the gradient vector fields giving rise to connection $\nabla^{f}$ provide a Lie algebra anti-representation.
However, this time the anti-representation is of the Lie algebra $\mathfrak{gl}(\hh)$ of the general linear group $\Glh$.
This representation  integrates to a transitive and smooth left action of $\Glh$ on $\stsph$.
In particular, every geodesic of $\nabla^{f}$ starting at $\rho$ and with initial tangent vector $\mathbf{v}=v^{j}\sigma_{j}$, where $v^{j}\in\mathbb{R}$ for every $j=1,...,(n^{2}-1)$, is of the form
\be\label{eqn: geodesic of deformed bures connection}
\rho_{\mathbf{v}}(t)=\frac{\left(\mathrm{e}^{t\mathbf{v}}\rho^{\sqrt{\kappa}}\mathrm{e}^{t\mathbf{v}}\right)^{\frac{1}{\sqrt{\kappa}}}}{\Tr\left(\left(\mathrm{e}^{t\mathbf{v}}\rho^{\sqrt{\kappa}}\mathrm{e}^{t\mathbf{v}}\right)^{\frac{1}{\sqrt{\kappa}}}\right)},
\ee
from which it is clear that $\nabla^{f}$ is always complete.

When $\kappa=1$, the monotone quantum metric tensor $G_{f}$ coincides with the Bures-Helstrom metric tensor \cfr{\cite{B-Z-2006,C-J-S-2020,Dittmann-1995,Helstrom-1967,Uhlmann-1992,Safranek-2018}}  ubiquitous in quantum estimation theory \cfr{\cite{C-J-S-2020-02,Fuchs-1996,L-Y-L-W-2020,Paris-2009,Suzuki-2019,S-Y-H-2020,T-A-D-2020}}, while when $\kappa=\frac{1}{4}$, the monotone quantum metric tensor $G_{f}$ coincides with the Wigner-Yanase metric tensor \cite{G-I-2001,G-I-2003,Hasegawa-2003,Jencova-2003-2}.
In all these cases, the commutator  between gradient vector fields is a fundamental vector field of $\alpha$ and, in general, it does not vanish  so that the connection $\nabla^{f}$ presents torsion.
Therefore, the tensor $T$ determined by $\nabla^{m},\nabla^{f}$, and $G_{f}$ will not be symmetric in the first and second entries so that $(\stsph,G_{f},T)$ is not a Statistical Manifold in the sense of \cite{Lauritzen-1987}.

\begin{remark}
	Incidentally, for a two-level quantum system, the cases $f(x)=\kappa(x-1)/\ln(x)$ with $\kappa>0$ and $f(x)=\frac{\kappa(x-1)(x^{\kappa}+1)}{2(x^{\kappa}-1)}$ with $0<\kappa<1$ are the only cases of monotone quantum metric  tensors  for which the gradient vector fields and the fundamental vector fields of the standard action of $\Uh$ on $\stsph$ give rise to a transitive group action on $\stsph$ \cfr{\cite{C-DN-2021}}.
\end{remark}

\section{Conclusions} \label{sec: conclusions}

In this work we investigated the notion of $G$-dual teleparallel  pairs  $(\nabla,\nabla^{*})$ of affine connections on a real, finite-dimensional, smooth, parallelizable Riemannian manifold $(\mathcal{N},G)$.
We showed how every such pair gives rise to a covariant $(0,3)$ tensor $T$ generalizing the Amari-\v{C}encov tensor of Statistical Manifolds to the case where connections present torsion.
Then, we reviewed some examples of $G$-dual teleparallel pairs in the context of Classical and Quantum Information Geometry, where the existence of a suitable (torsion-free and flat) mixture connection $\nabla^{m}$ (associated with a suitable convex structure) gifts us with a preferred teleparallel connection.

In the classical case, it turns out that the dual connection is also torsion-free and flat, and it is known as  the  exponential connection, a fact which is well-known. 
Following the theory developed in section \ref{sec: dually-weitzenbock manifolds}, we find the explicit form of a basis of vector fields determining the exponential connection as the associated teleparallel connection, and show how these vector fields provide a homogeneous manifold structure for the interior of the n-simplex $\Delta_{n}$ such that the geodesics of the exponential  connection  coincides with the images of one-parameter subgroups.
As far as we know, these geometric considerations regarding the exponential connection are new.

In the quantum case, we investigate $G_{f}$-dual pairs $(\nabla,\nabla^{*})$ where $G_{f}$ is a monotone quantum metric tensor, and $\nabla$ is the mixture connection $\nabla^{m}$ on the manifold $\stsph$ of faithful quantum states of a quantum system with finite-dimensional Hilbert space $\hh$.
The theory developed in section \ref{sec: dually-weitzenbock manifolds} allows us to conclude that $\nabla^{*}\equiv\nabla^{f}$ is the teleparallel connection associated with the gradient vector fields of the expectation value functions.
These gradient vector fields already appeared in the literature in connection with transitive actions on $\stsph$ of suitable extensions of the unitary group $\Uh$, and we are here able to make a point of contact between these group actions and the properties of $\nabla^{f}$.
In particular, whenever these groups appear, we are able to explicitly write down the geodesics of $\nabla^{f}$ again as the images of one-parameter subgroups, from which it follows that $\nabla^{f}$ is a complete connection.
However, it is in general true that $\nabla^{f}$ presents torsion unless $G_{f}$ is the so-called Bogoliubov-Kubo-Mori metric tensor, in which case $\nabla^{f}$ is the quantum counterpart of the exponential connection.

A natural evolution of the ideas presented here would be the investigation of how $G$-dual teleparallel pairs “projects down” on homogeneous manifolds.
Indeed, it is clear that $G$-dual teleparallel pairs only exist on parallelizable manifolds, but not all manifolds of interest in Classical and Quantum Information Geometry are parallelizable.
A very relevant example is given by the manifold of pure states of a finite-level quantum system, which is diffeomorphic with the complex projective space  $\mathbb{P}(\hh)$.
In this and other cases, however, the manifold of interest $\mathcal{N}$ is actually a homogeneous manifold for some Lie group $\gapp$ (e.g., the unitary group $\Uh$ for the manifold of pure quantum states), and Lie groups are obviously parallelizable manifolds.
Then, we may pull-back the Riemannian metric tensor on $\mathcal{N}$  “coming from” Information Geometry to the Lie group $\gapp$ by means of the canonical projection  $\pi\colon \gapp\ra\mathcal{N}$, and complement it to a Riemannian metric tensor $G$ on $\gapp$ for which  $G$-dual teleparallel pairs exist.
At this point, $\pi$ becomes a Riemannian submersion and the idea would be to  understand if and how it is possible to “project down” a $G$-dual teleparallel pair on the homogeneous manifold in a suitable way, perhaps exploiting a suitable projector to describe the module of vector fields on $\mathcal{N}$ in terms of the module of vector fields on $\gapp$. 
We are currently investigating this problem,  and we plan to discuss it in a future publication.

\section*{Funding}

This work has been supported by the Madrid Government (Comunidad de Madrid-Spain) under the Multiannual Agreement with UC3M in the line of “Research Funds for Beatriz Galindo Fellowships” (C\&QIG-BG-CM-UC3M), and in the context of the V PRICIT (Regional Programme of Research and Technological Innovation).
The authors acknowledge financial support from the Spanish Ministry of Economy and Competitiveness, through the Severo Ochoa Programme for Centres of Excellence in RD (SEV-2015/0554), the MINECO research project  PID2020-117477GB-I00,  and Comunidad de Madrid project QUITEMAD++, S2018/TCS-A4342.
G. M. is also a member of the Gruppo Nazionale di Fisica Matematica (INDAM), Italy. 
F. D. C. acknowledges support from the CONEX-Plus programme funded by Universidad Carlos III de Madrid and the European Union’s Horizon 2020 research and innovation programme under the Marie Sklodowska-Curie grant agreement No. 801538.

\addcontentsline{toc}{section}{References}

\end{document}